\newcommand{\N}{\mathbb{N}}
\newcommand{\Z}{\mathbb{Z}}
\newcommand{\xpt}{\textup{EXPTIME}}
\newcommand{\xpth}{\textup{EXPTIME}-hard}
\newcommand{\xps}{\textup{EXPSPACE}}
\newcommand{\psc}{\textup{PSPACE}-complete}
\newcommand{\psh}{\textup{PSPACE}-hard}
\newcommand{\psp}{\textup{PSPACE}}
\newcommand{\ptm}{\textup{P}}
\let\kw\textit
\def\vass{\textup{VASS}}
\def\wkvass{non-blocking \textup{VASS}}
\def\Wkvass{Non-blocking \textup{VASS}}
\newcommand{\pbsys}[3]{\textup{Reach-\textit{semantics}}$_{#1}^{#2}\ifx&#3&{}\else{(#3)}\fi$}
\newcommand{\pbvass}[3]{\textup{Reach-VASS}$_{#1}^{#2}\ifx&#3&{}\else{(#3)}\fi$}
\newcommand{\pbcs}[3]{\textup{Reach-CS}$_{#1}^{#2}\ifx&#3&{}\else{(#3)}\fi$}
\newcommand{\pbwkvass}[3]{\textup{Reach-NBVASS}$_{#1}^{#2}\ifx&#3&{}\else{(#3)}\fi$}
\def\shrtrgd{short-ranged}
\def\shrtrg{short-range}
\def\emptyone{\emptyset_1}
\def\emptytwo{\emptyset_2}
\def\emptyi{\emptyset}
\newcommand{\circpl}{{\raisebox{1pt}{\scalebox{0.9}{\ensuremath{\bigcirc}}}}}
\newcommand{\boxpl}{\raisebox{-0.5pt}{\scalebox{1.15}{\ensuremath{\Box}}}}
\newcommand{\anypl}{\raisebox{-0.5pt}{\scalebox{1.75}{\ensuremath{\diamond}}}}
\title{On The Complexity of Counter Reachability Games
\thanks{This work is supported by the french Agence Nationale de la Recherche, REACHARD (grant ANR-11-BS02-001).
It will appear in the proceedings of the 7th International workshop on Reachability Problems (RP 2013).}
}
\author{Julien Reichert\thanks{reichert@lsv.ens-cachan.fr}}
\institute{LSV, ENS Cachan, France}
\date{\today}
\begin{document}
\maketitle

\begin{abstract}
Counter reachability games are played by two players on a graph with labelled edges.
Each move consists in picking an edge from the current location and adding its label to a counter vector.
The objective is to reach a given counter value in a given location.
We distinguish three semantics for counter reachability games,
according to what happens when a counter value would become negative:
the edge is either disabled, or enabled but the counter value becomes zero, or enabled.
We consider the problem of deciding the winner in counter reachability games
and show that, in most cases, it has the same complexity under all semantics.
Surprisingly, under one semantics, the complexity in dimension one depends on whether the objective value is zero or any other integer.
\end{abstract}

\section{Introduction}

Counter reachability games are played by two~players, a~Reacher and an~Opponent, on a~counter system.
Such a system is represented by a~labelled directed graph $(Q,E)$,
where $Q$ is a finite set of locations and $E \subseteq Q \times \Z^d \times Q$ is a~set of edges.
The integer $d$ is the dimension of the system.
We associate to a counter system a~vector of $d$ counters, which is updated when an edge $(q,v,q')$ is taken by adding $v$ to it.
The locations are partitioned into a~set $Q_1$ of Reacher locations and a~set $Q_2$ of Opponent locations.
A~configuration in a counter system is a pair (location, counter vector).

A~play is an~infinite sequence $(q_0,v_0) (q_1,v_1) \dots \in (Q \times \Z^d)^\omega$,
starting at a~given initial location $q_0$ with the~initial counter vector $v_0$.
At~any stage $i$, the~owner of the~location $q_i$ chooses an~edge $(q_i,v,q_{i+1})$,
then the next configuration is~$(q_{i+1},v_i+v)$.
The~objective is given by a~subset $C$ of $Q \times \Z^d$:
Reacher wins every play that reaches a~configuration in $C$.
Here, we deal with cases where it is equivalent to consider only subsets $C$ that are singletons.

In many works on counter systems, there are only nonnegative counter values,
e.g., in vector addition systems with states (\vass, in short) \cite{KM69},
an edge is disabled whenever it would make a counter become negative.
In energy games \cite{BFLMS08,FJLS11}, the objective is to bound counter values, especially with~$0$ as lower bound.

In order to capture common behaviours around zero, we consider three semantics for counter systems:
\vspace{-2mm}
\begin{itemize}
\item \kw{$\Z$ semantics}: A counter can have any value in~$\Z$.
\item \kw{\vass\ semantics}: An edge is disabled if taking it would make any counter value become negative.
\item \kw{\wkvass\ semantics}: Every time an edge is taken, negative values are replaced by $0$.
\end{itemize}

The decision problem associated to a counter reachability game is to determine whether Reacher has a~winning strategy.
We study decidability and complexity of this problem under the three semantics.
Most of our results assume that the set of edges is restricted to a subset of $Q \times \{-1,0,1\}^d \times Q$;
we call this the \kw{\shrtrg} property and we say that counter systems are \kw{\shrtrgd}.
Any counter system can be transformed into a \shrtrgd\ counter system at the cost of an exponential blowup,
by splitting the edges with labels not in $\{-1,0,1\}^d$.
However, we need to be careful when we deal with reachability issues,
because a run in the \shrtrgd\ counter system visits configurations
that the corresponding run of the first counter system does not visit.

We prove in Section~$3$ that the decision problem is undecidable for reachability games
on counter systems of dimension~two under the $\Z$ semantics,
by an adaptation of the undecidability proof for reachability games on \vass\ of dimension~two in~\cite{BJK10}.

We prove in Section~$4$ that the decision problem is \psc\ for reachability games
on \shrtrgd\ counter systems of dimension~one under the $\Z$ semantics when the objective is~$(q_f,0)$,
and under the \wkvass\ semantics when the objective is~$(q_f,1)$.
The proof is based on mutual reductions from the decision problem for reachability games on \shrtrgd\ counter systems of dimension~one
under the \vass\ semantics when the objective is~$(q_f,0)$, which has been proved \psc\ in \cite{BJK10}.
The case of a reachability games on \shrtrgd\ counter systems of dimension~one under the \wkvass\ semantics
when the objective is $(q_f,0)$ is considered separately. Surprisingly, the decision problem is then in \ptm.

Without the \shrtrg\ property, we have an immediate \xps\ upper bound for counter reachability games
in dimension~one. There are at least two particular cases of counter reachability games
for which the decision problem is \xpth\ in dimension~one:
countdown games~\cite{JLS07} and robot games~\cite{AR13}.
To~the best of our knowledge, it is not known whether counter reachability games in dimension~one are in \xpt.

\section{Definitions}

When we write ``positive'' or ``negative'', we always mean ``strictly positive'' or ``strictly negative''. We
write $-\N$ for the set of nonpositive integers.

A~\kw{counter system} is a directed graph $(Q,E)$, where $Q$ is a finite set of \kw{locations}
and $E \subseteq Q \times \Z^d \times Q$ is a finite set of \kw{edges}, with $d \in \N \setminus \{0\}$.
The vector in~$\Z^d$ is called the \kw{label} of an edge.
A~\kw{configuration} in a counter system is a pair $(q,x)$, where $q \in Q$ and $x \in \Z^d$.
A~\kw{run} of a counter system $(Q,E)$ is an infinite sequence
$r = (q_0,x_0) (q_1,x_1) \dots$ starting from an arbitrary initial configuration $(q_0,x_0) \in Q \times \Z^d$
and such that $(q_i,x_{i+1}-x_i,q_{i+1}) \in E$ for every $i \in \N$.
A~counter system has the \kw{\shrtrg\ property} if the integers in the labels of the edges are always in~$\{-1,0,1\}$.

A~\kw{counter reachability game} is played by two players, a \kw{Reacher} and an \kw{Opponent}, on a counter system $(Q,E)$.
We partition the set of locations into $Q_1 \biguplus Q_2$; Reacher owns $Q_1$, and Opponent owns $Q_2$.
In our figures, we use \circpl\ to represent Reacher locations,
\boxpl\ to represent Opponent locations and \anypl\ when the owner of the location does not matter.

A~\kw{play} is represented by an infinite path of configurations that players form by moving a token on $(Q,E)$ and updating a counter as follows.
At the beginning, the token is at a location $q_0$ and the counter is initialized with $x_0$, hence the initial configuration is $(q_0,x_0)$.
If the token is at $p \in Q_1$, then Reacher chooses an edge $(p,v,q)$, otherwise Opponent chooses.
The token is moved to $q$, the counter is updated to $x+v$, and the configuration $(q,x+v)$ is appended to the play.
There is a special configuration, called the \kw{objective} of the game, such that
Reacher wins every play that visits the objective.

A \kw{play prefix} starting from the configuration $(q_0,x_0)$
is a finite sequence $(q_0,x_0) (q_1,x_1) \dots (q_k,x_k)$ of configurations in the underlying counter system.
A~\kw{strategy} for a player is a function that takes as argument a play prefix
and returns an edge that is available from the end of the play prefix.
Given an configuration $(q_0,x_0)$, two strategies $s_1$ and $s_2$ for the players,
the \kw{outcome} of these strategies from the configuration is the play starting at $(q_0,x_0)$ and obtained
when each player always chooses edges according to his strategy.
A~strategy $s$ is \kw{winning} for a player, from a given configuration,
if he wins the outcome of $s$ with any strategy of the other player from the configuration.
A configuration $(q_0,x_0)$ in the game is \kw{winning} if Reacher has a winning strategy from $(q_0,x_0)$.
The decision problem associated to a counter reachability game
is to determine whether Reacher has a winning strategy from a configuration in input.

A \kw{Vector Addition System with States} (\vass, in short) is a counter system where
the vectors in the configurations are always nonnegative. In order to maintain this property,
an edge in a \vass\ is disabled if a counter would then become negative.
A \kw{\wkvass} is a counter system where every negative counter value is replaced by $0$.

We introduce a notation for the decision problems that we deal with,
and we write \pbsys{d}{1}{x_f} with the following parameters:
a subscript~$d$ for the dimension, an argument~$x_f$ for the counter value in the objective
and a superscript~$1$ to point out, if present, when the system is \shrtrgd.
The counter value in the objective is also optional.
We omit the location in the objective, because only the counter value is relevant here.
For example, let us look at two notations that appear in the next two sections.
\begin{itemize}
\item The problem of deciding the winner on a counter system of dimension~two with an arbitrary objective is denoted by \pbcs{2}{}{}.
\item The problem of deciding the winner on a \shrtrgd\ \wkvass\ of dimension one with~$1$ as objective value is denoted by \pbwkvass{1}{1}{1}.
\end{itemize}

\section{Counter reachability games in dimension two or more}

\subsection{Reduction from \vass\ to general counter systems}

We present a construction that we use in this section
to prove undecidability of counter reachability games in dimension two,
and in the next section to give lower complexity bounds.

In order to show the reduction from \vass\ to general counter systems,
we simulate in the winning condition the deactivation of edges in \vass, which makes the difference to the $\Z$ semantics.
We here denote by $0_d$ the $d$-dimensional vector $(0,\dots,0)$.

\begin{proposition}\label{VASS_crg}
\pbvass{d}{}{0_d} reduces to \pbcs{d}{}{0_d} in polynomial time for any dimension~$d$.
\end{proposition}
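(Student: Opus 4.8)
The plan is to keep the underlying graph of the \vass\ essentially unchanged, to read every edge under the $\Z$ semantics, and to encode the disabling of an edge purely in the winning condition, as announced. An edge that a player takes is enabled in the \vass\ exactly when it keeps all counters nonnegative, so the only discrepancy between the two semantics is that under $\Z$ a player may take an edge that drives some counter below $0$. I would therefore give, on every edge, the adversary of the mover the right to \emph{challenge} the move, claiming that it has sent some fixed counter $i$ below $0$; a correct challenge must reward the challenger and an incorrect one must punish him. Concretely, an Opponent edge $(q,v,q')$ is replaced by $q \xrightarrow{v} r_e$ with $r_e$ owned by Reacher, who may either proceed to $q'$ with label $0_d$ or enter a gadget $A_i$; symmetrically a Reacher edge becomes $q \xrightarrow{v} o_e$ with $o_e$ owned by Opponent, who may proceed or enter a gadget $A'_i$. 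To make these gadgets genuine traps I would add a fresh sink location $t$ and move the objective to $(t,0_d)$, letting Reacher convert $(q_f,0_d)$ into this new objective through a pass-through copy of $q_f$ (two $0_d$-labelled edges, one to $q_f$ and one to $t$); reaching $t$ with a nonzero counter then leads nowhere.

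The heart of the construction is the pair of detection gadgets, which turn ``counter $i$ is negative'' into a reachability statement about $(t,0_d)$. In $A_i$ (Reacher challenging an Opponent move) the entry edge first adds $+e_i$, and the single gadget location, owned by Reacher, carries a $+e_i$ self-loop, self-loops $\pm e_j$ for every $j \neq i$, and an edge to $t$ with label $0_d$. Reacher can zero every coordinate $j \neq i$ freely but can only increase coordinate $i$, so he reaches $(t,0_d)$ if and only if the adjusted value of counter $i$ is nonpositive, that is, if and only if the post-move value $x_i$ satisfies $x_i < 0$; the $+1$ adjustment is exactly what turns the nonstrict test into the strict one needed to separate a forbidden move from a legal move that leaves counter $i$ at $0$. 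The gadget $A'_i$ is the mirror image: its Reacher-owned location carries a $-e_i$ self-loop together with the loops $\pm e_j$ and the exit to $t$, so Reacher reaches $(t,0_d)$ if and only if $x_i \ge 0$. Since the only way out of either gadget is the edge into $t$, a player who fails the test is condemned to an infinite play that never visits the objective, which is what makes the gadgets faithful and rules out any abusive escape back into normal play.

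With these gadgets the two semantics produce the same winner. From a winning Reacher strategy in the \vass\ I would build a $\Z$-strategy that proceeds along the simulated edges and challenges the appropriate coordinate whenever Opponent commits a move sending a counter below $0$: as long as Opponent plays legally the play projects onto a legal \vass\ play reaching $(q_f,0_d)$, hence $(t,0_d)$, and any illegal Opponent move is punished at once in the matching $A_i$. Conversely, confronting a winning $\Z$-strategy of Reacher with the Opponent that plays its \vass\ moves honestly but challenges every counter Reacher drives negative, one shows that this strategy can never take an illegal edge without being refuted in some $A'_i$; the induced play therefore stays nonnegative and yields a winning \vass\ play. The usual ``a player who cannot move loses'' convention for \vass\ deadlocks is captured for free: a player all of whose edges are disabled is forced, under $\Z$, to commit an illegal move and is then beaten in the corresponding gadget.

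The construction adds one sink, the pass-through copy of $q_f$, $O(|E|)$ intermediate locations and $O(d)$ gadget locations, each with $O(d)$ edges, so it is computable in polynomial time and establishes the claimed reduction. The step I expect to be the main obstacle is proving the faithfulness of the detection gadgets together with their no-escape property: one must check simultaneously that a correct challenge always reaches $(t,0_d)$, that an incorrect one never does, and that the freedom Reacher is granted inside a gadget to rezero the other coordinates cannot be exploited to reach the objective along a path with no counterpart in the \vass. This is precisely where routing every gadget into the terminal sink $t$ and orienting the loop on coordinate $i$ in a single direction are essential.
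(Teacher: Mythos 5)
Your construction is essentially the paper's own proof: adversary-owned test locations after each move, challenge gadgets that verify negativity of a chosen coordinate via a $+1$ offset and one-directional self-loops (your $A_i$ is exactly the paper's $\text{check}_i$), and the objective relocated to a fresh sink reachable from $q_f$ with counter $0_d$. The only cosmetic differences are that the paper splits only edges with some negative label component and uses a single all-coordinates check location (decrement-only self-loops) where you use per-coordinate gadgets $A'_i$ for Opponent's challenges; both variants are correct.
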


\begin{proof}
Let $(Q,E)$ be a \vass\ of dimension $d$, let $(q_0,x_0)$ and $(q_f,x_f)$ be configurations of $(Q,E)$.
We consider the reachability game on $(Q,E)$ where the objective is $(q_f,x_f)$.

The following hypothesis makes most proofs of this work simpler, without loss of generality.
We assume that $q_f$ is a Reacher location.
Else, we could simply create a Reacher location $q_f'$ that has only one outgoing edge
to $q_f$ with label $(0)$ and choose as objective $(q'_f,x_f)$.

We want to build a general counter system on which Reacher has a winning strategy from a particular configuration if,
and only if, he has a winning strategy from $(q_0,x_0)$ in the \vass.
The key property is that each player must be able to win
whenever his adversary makes a counter value become negative.
We can then simulate the \vass\ semantics.

In order to have this property, let $(Q',E')$ be a counter system
with locations $Q' = Q \cup \{\text{test}_e\ |\ e = (p,v,q) \in E, v \not\in \N^d\} \cup \{\text{check},\text{check}_1,\dots,\text{check}_d\}$,
where Reacher owns $Q_1$, the check locations and exactly the locations test$_e$ for which the source of $e$ belongs to Opponent in $(Q,E)$.
The set of edges $E'$ is obtained from $E$, first by splitting every edge $e = (p,v,q)$ such that $v \not\in \N^d$
into two edges $(p,v,\text{test}_e)$ and $(\text{test}_e,0,q)$,
and second by adding moves from every location test$_e$ to the new locations of $Q'$,
as depicted in Figures~\ref{vass_crg_reach} and~\ref{vass_crg_opp}. 

\begin{figure}
\begin{center}
\begin{tikzpicture}[->,>=stealth',shorten >=1pt,auto,node distance=2.5cm,
                    semithick]
  \tikzstyle{state}=[circle,minimum size=1cm,fill=white,draw=black,text=black,font=\small]
  \tikzstyle{oppstate}=[rectangle,minimum size=1cm,fill=white,draw=black,text=black,font=\small]
  \tikzstyle{anystate}=[diamond,minimum size=1cm,fill=white,draw=black,text=black,font=\small]

  \node[state] (A)                    {$p$};
  \node[oppstate]         (B) [right of=A] {test$_e$};
  \node[state]         (C) [right of=B] {check};
  \node[state]         (D) [right of=C] {$\bot$};
  \node[anystate]         (E) [below of=B] {$q$};

  \path (A) edge node {$(-1,-2)$} (B)
        (B) edge node {$(0,0)$} (C)
            edge node {$(0,0)$} (E)
        (C) edge [loop above] node {$(0,-1)$} (C)
            edge [loop below] node {$(-1,0)$} (C)
            edge node {$(0,0)$} (D)
        (D) edge [loop below] node {$(0,0)$} (D);

\end{tikzpicture}
\end{center}
\caption{Gadget to replace an edge $e = (p,(-1,-2),q)$ from a Reacher location
in the reduction from \pbvass{2}{}{(0,0)} to \pbcs{2}{}{(0,0)}.}
\label{vass_crg_reach}
\end{figure}
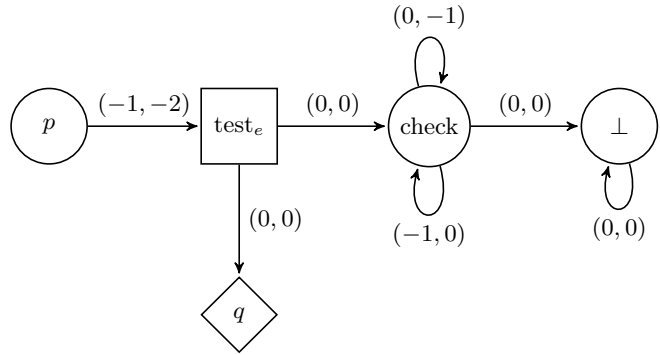

More precisely, $E'$ is the union of the following sets of edges,
where $(x)_{i,d}$ is the vector with $x$ as $i^{\text{th}}$ component and $0$ everywhere else:
\begin{itemize}
\item $\{(p,v,q) \in E\ |\ v \in \N^d\}$;
\item $\{(p,v,\text{test}_e), (\text{test}_e,(0),q)\ |\ e = (p,v,q) \in E, v \not\in \N^d\}$;
\item $\{(\text{test}_e,(0),\text{check})\ |\ e = (p,v,q) \in E, p \in Q_1\}$;
\item $\{(\text{test}_e,(0),\text{check}_i)\ |\ e = (p,v,q) \in E, p \in Q_2, 1 \le i \le d\}$;
\item $\{(\text{check},(-1)_{i,d},\text{check})\ |\ 1 \le i \le d\}$;
\item $\{(\text{check}_i,(-1)_{j,d},\text{check}_i)\ |\ 1 \le j \le d, j \not= i\}$;
\item $\{(\text{check}_i,(1)_{j,d},\text{check}_i)\ |\ 1 \le j \le d\}$;
\item $\{(q_f,-x_f,\bot)\} \cup \{(p,0,\bot)\ |\ p \in \{\bot,\text{check},\text{check}_1,\dots,\text{check}_d\}\}$.
\end{itemize}

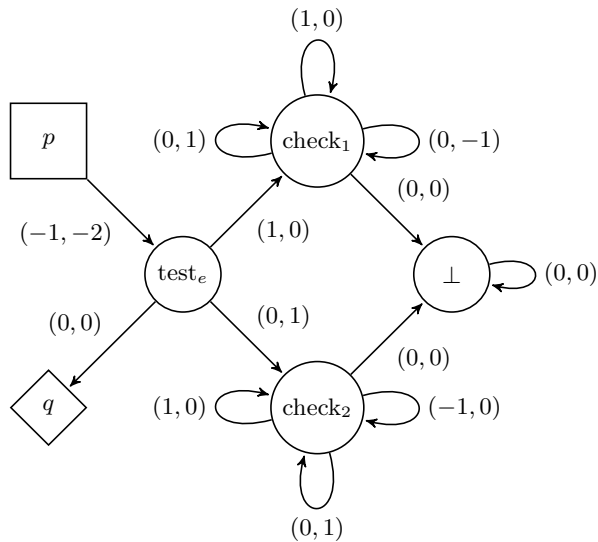
\begin{figure}
\begin{center}
\begin{tikzpicture}[->,>=stealth',shorten >=1pt,auto,node distance=2.5cm,
                    semithick]
  \tikzstyle{state}=[circle,minimum size=1cm,fill=white,draw=black,text=black,font=\small]
  \tikzstyle{oppstate}=[rectangle,minimum size=1cm,fill=white,draw=black,text=black,font=\small]
  \tikzstyle{anystate}=[diamond,minimum size=1cm,fill=white,draw=black,text=black,font=\small]

  \node[oppstate] (A)                    {$p$};
  \node[state]         (B) [below right of=A] {test$_e$};
  \node[state]         (C) [above right of=B] {check$_1$};
  \node[state]         (D) [below right of=B] {check$_2$};
  \node[state]         (E) [above right of=D] {$\bot$};
  \node[anystate]         (F) [below left of=B] {$q$};
  
  \path (A) edge [swap] node {$(-1,-2)$} (B)
        (B) edge [swap] node {$(1,0)$} (C)
            edge node {$(0,1)$} (D)
            edge [swap] node {$(0,0)$} (F)
        (C) edge [loop above] node {$(1,0)$} (C)
            edge [loop right] node {$(0,-1)$} (C)
            edge [loop left] node {$(0,1)$} (C)
            edge node {$(0,0)$} (E)
        (D) edge [loop below] node {$(0,1)$} (D)
            edge [loop left] node {$(1,0)$} (D)
            edge [loop right] node {$(-1,0)$} (D)
            edge [swap] node {$(0,0)$} (E)
        (E) edge [loop right] node {$(0,0)$} (E);

\end{tikzpicture}
\end{center}
\caption{Gadget to replace an edge $e = (p,(-1,-2),q)$ from an Opponent location
in the reduction from \pbvass{2}{}{(0,0)} to \pbcs{2}{}{(0,0)}.}
\label{vass_crg_opp}
\end{figure}

The objective of the counter reachability game is $(\bot,(0,\dots,0))$.
Hence, in the location check, Reacher has a winning strategy if, and only if, every counter is nonnegative,
and in the location check$_i$, Reacher has a winning strategy if, and only if, the $i^\text{th}$ counter,
which has been incremented when the play reached check$_i$, is nonpositive.
Consequently, as soon as a player makes a counter become negative, his adversary has a winning strategy by going to a check location.
If all counters remain positive, then Reacher has a winning move once the play visits the objective of the game on $(Q,E)$,
and only in this case.

The reduction is polynomial: we have $|Q'| \le d+2+|Q|+|E|$ and $|E'| \le (d+2)|E|+2d(d+1)+2$.
Moreover, the \shrtrg\ property is preserved when it holds for the reduced \vass,
provided that the objective in the \vass\ is a vector that contains only values in $\{-1,0,1\}$.\qed
\end{proof}

\subsection{Undecidability of counter reachability games on \vass}

The following proposition rephrases Proposition~$4$ from \cite{BJK10}.

\begin{theorem}[\cite{BJK10}]\label{eVASS_undec}
Let $(Q,E)$ be a \shrtrgd\ \vass\ of dimension two.
Consider a reachability game on $(Q,E)$ with $Q_Z \times ((\{0\} \times \N) \cup (\N \times \{0\}))$ as objective,
where $Q_Z \subseteq Q$. The problem of deciding the winner of this game is undecidable.
\end{theorem}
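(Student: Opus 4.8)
The plan is to reduce from the halting problem of deterministic two-counter Minsky machines started with both counters set to $0$, which is undecidable. Given such a machine $M$ with counters $c_1,c_2$ and a distinguished halting instruction, I would build a \shrtrgd\ \vass\ $(Q,E)$ of dimension two whose two counters directly hold the values of $c_1$ and $c_2$. The locations $Q$ consist essentially of the instructions of $M$ together with a few auxiliary gadget locations. Increment and decrement instructions translate immediately into single edges with labels in $\{-1,0,1\}^2$, namely $(1,0),(0,1),(-1,0),(0,-1)$; note that a decrement edge is, under the \vass\ semantics, automatically disabled when the corresponding counter is $0$, which is exactly the behaviour we want. I would set $Q_Z = \{\ell_{\mathrm{halt}}\}$ and rely on the standard normal form in which $M$ halts only after zeroing both counters, so that reaching $\ell_{\mathrm{halt}}$ entails $c_1=c_2=0$; the objective $Q_Z \times ((\{0\}\times\N)\cup(\N\times\{0\}))$ is then met precisely when $M$ halts.

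The crux, and the only nontrivial point, is simulating a zero-test ``if $c_i=0$ goto $\ell$ else $c_i:=c_i-1$, goto $\ell'$'' using the game interaction rather than a forbidden direct test. I would make the test location a Reacher location offering two moves: a ``claim-zero'' edge with label $(0,0)$ leading to an Opponent-controlled checkpoint, and a ``nonzero'' edge with label $-e_i$ (i.e.\ decrementing $c_i$, where $e_i$ denotes the $i$-th unit vector) leading to $\ell'$. Since the nonzero edge is disabled when $c_i=0$, Reacher can only genuinely take it when $c_i>0$, so the else-branch is faithful. At the checkpoint reached after a claim of $c_i=0$, Opponent chooses between a $(0,0)$-labelled edge continuing the simulation at $\ell$ and a ``punish'' edge with label $-e_i$ leading to a non-accepting sink (a fresh location outside $Q_Z$ with only a $(0,0)$ self-loop). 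The punish edge is enabled exactly when $c_i>0$: thus if Reacher lies by claiming $c_i=0$ while $c_i>0$, Opponent diverts into the sink and Reacher can never reach the objective, whereas if $c_i=0$ the punish edge is disabled and Opponent is forced to continue. Hence neither player can profitably cheat, and honest play is the only live option. All introduced labels stay in $\{-1,0,1\}^2$, so the \shrtrg\ property is preserved.

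With these gadgets in place, I would prove the two implications of the reduction: if $M$ halts, Reacher wins by simulating $M$ faithfully, since every zero-claim he makes is true, every punish edge Opponent might use is thereby disabled, and the play reaches $\ell_{\mathrm{halt}}$ with $c_1=c_2=0$, satisfying the objective; conversely, if Reacher has a winning strategy, Opponent's standing threat to punish any dishonest zero-claim forces that strategy to simulate $M$ faithfully, so the unique induced play reaches $\ell_{\mathrm{halt}}$, meaning $M$ halts. The main obstacle I anticipate is making this correctness argument airtight: one must verify that the permissive, symmetric objective $c_1=0 \lor c_2=0$ creates no spurious Reacher win (which it does not, since $Q_Z$ is the single location $\ell_{\mathrm{halt}}$, reachable only through faithful simulation), and that Opponent's forced continuation on honest zero-claims never lets the simulation deviate. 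This is exactly the adaptation of the undecidability argument of \cite{BJK10}.
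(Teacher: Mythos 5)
The paper does not actually prove this statement: it imports it verbatim as Proposition~4 of \cite{BJK10}, so there is no in-paper argument to compare yours against. Your reduction is nevertheless correct and is in substance the argument underlying the cited result: reduce from the halting problem for deterministic two-counter Minsky machines started at $(0,0)$, let Reacher drive the simulation, translate increments and guarded decrements into unit-vector edges (the \vass\ semantics automatically disabling a decrement at zero, which keeps the else-branch faithful), and implement the zero-test by a claim-and-challenge gadget in which Opponent can punish a false claim that $c_i=0$ by taking a $-e_i$-labelled edge into a losing sink --- an edge that is enabled exactly when the claim is false. Both directions of the correctness argument go through: a lying Reacher is punished, an honest Reacher at $c_i=0$ has no alternative because the decrementing branch is disabled, and the normal form that zeroes both counters before halting guarantees that reaching $\ell_{\mathrm{halt}}$ meets the objective $Q_Z\times((\{0\}\times\N)\cup(\N\times\{0\}))$ with $Q_Z=\{\ell_{\mathrm{halt}}\}$. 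Two minor points to make explicit in a full write-up: every location (in particular $\ell_{\mathrm{halt}}$ and the sink) should carry a $(0,0)$ self-loop so that plays remain infinite, and since your instances use a singleton $Q_Z$ and arrive with both counters zero, you prove undecidability of a special case of the stated problem --- which of course suffices for the general statement, and incidentally would make the paper's subsequent Proposition~\ref{eVASS_VASS} nearly immediate on the instances your reduction produces.
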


To apply Proposition~\ref{VASS_crg}, there must be only one configuration in the objective. 

\begin{proposition}\label{eVASS_VASS}
Let $(Q,E)$ be a \vass\ of dimension two. Consider a reachability game on $(Q,E)$
with $Q_Z \times ((\{0\} \times \N) \cup (\N \times \{0\}))$ as objective, where $Q_Z \subseteq Q$.
We can build a \vass\ $(Q',E')$ such that Reacher wins the reachability game on $(Q,E)$ if,
and only if, he wins the reachability game on $(Q',E')$ with objective $(\bot,(0,0))$, where $\bot \in Q' \setminus Q$.
\end{proposition}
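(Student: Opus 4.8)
The plan is to enlarge $(Q,E)$ with a small gadget that lets Reacher convert any visit to the objective set into a visit to the single configuration $(\bot,(0,0))$, relying on the \vass\ semantics (an edge that would make a counter negative is disabled) to \emph{certify} that one of the two counters is genuinely $0$.

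The first step addresses ownership. Since the objective set $Q_Z \times ((\{0\}\times\N)\cup(\N\times\{0\}))$ may be entered at an Opponent location, Reacher must be able to seize the win at that very moment. I would therefore insert, in front of every $q \in Q_Z$, a new Reacher location $q^R$: every edge $(p,v,q)\in E$ whose target lies in $Q_Z$ is redirected to $(p,v,q^R)$, and $q^R$ gets an edge $(q^R,(0,0),q)$ that merely continues the original play, leaving the original owner of $q$ in control afterwards. Thus whenever the original play would reach $(q,x)$ with $q\in Q_Z$, the new play reaches $(q^R,x)$ with the same counter vector, and it is now Reacher who decides what happens next.

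The second step is the certification gadget. From every $q^R$ I add two further Reacher edges $(q^R,(0,0),g_1)$ and $(q^R,(0,0),g_2)$, where $g_1$ claims ``the first counter is $0$'' and $g_2$ claims ``the second counter is $0$''. Location $g_1$ carries a single self-loop $(0,-1)$ and an edge $(g_1,(0,0),\bot)$; location $g_2$ carries a self-loop $(-1,0)$ and an edge $(g_2,(0,0),\bot)$; and $\bot$ carries a self-loop $(0,0)$. The crucial point is that $g_1$ never touches the first counter, so under the \vass\ semantics its value is frozen at whatever it was at $q$; hence Reacher can reach $(\bot,(0,0))$ through $g_1$ if and only if the first counter was already $0$, in which case he drains the second counter down to $0$ (the loop $(0,-1)$ being disabled once it hits $0$) and exits. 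Symmetrically for $g_2$. Consequently $(\bot,(0,0))$ is reachable from $(q^R,x)$ exactly when $x$ lies in $(\{0\}\times\N)\cup(\N\times\{0\})$. All new labels lie in $\{-1,0,1\}^2$, so the \shrtrg\ property is preserved, which matters for the downstream reductions.

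Finally I would prove the two implications. For the forward direction, a winning Reacher strategy in $(Q,E)$ forces the play into the objective set; replaying it through the $q^R$ ``continue'' edges until that configuration is reached and then cashing in via the appropriate $g_i$ wins in $(Q',E')$. For the converse, any play reaching $(\bot,(0,0))$ must pass through some $g_i$ entered from a $q^R$ at which the claimed counter equals $0$; deleting the inserted $q^R$ and gadget steps yields an original play that visits the objective set, and this holds uniformly against every Opponent strategy. The main obstacle, and the only place where care is needed, is exactly the soundness of the gadget: one must verify that Reacher cannot fabricate a win — that freezing the claimed-zero counter, together with the \vass\ semantics on the draining loop, makes $(\bot,(0,0))$ unreachable whenever both counters are positive — and that the forced pre-location $q^R$ grants neither player any spurious power beyond the intended cash-in choice.
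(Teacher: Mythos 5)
Your proposal is correct and follows essentially the same route as the paper: after forcing the objective locations to be (preceded by) Reacher locations, you attach to each of them two draining gadgets, each of which freezes one counter and decrements the other under the \vass\ semantics, so that $(\bot,(0,0))$ is reachable exactly when one counter was already zero on entry. Your $g_1$ and $g_2$ are, up to renaming, the paper's $\emptyone$ and $\emptytwo$, and your explicit insertion of the Reacher location $q^R$ is the same without-loss-of-generality step the paper imports from Proposition~\ref{VASS_crg}.
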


\begin{proof}
We suppose that $Q_Z$ contains Reacher locations only.
This is without loss of generality as in the proof of Proposition~\ref{VASS_crg}. 
Let $Q' = Q \cup \{\emptyone,\emptytwo,\bot\}$, and let
\begin{align*}
E' = E & \cup \{(q,(0,0),\emptyone), (q,(0,0),\emptytwo)\ |\ q \in Q_Z\}\\
& \cup \{(\emptyone,(-1,0),\emptyone), (\emptytwo,(0,-1),\emptytwo)\}\\
& \cup \{(\emptyone,(0,0),\bot), (\emptytwo,(0,0),\bot), (\bot,(0,0),\bot)\}.
\end{align*}
Note that the \shrtrg\ property is preserved.
If Reacher has a winning strategy in the game on $(Q,E)$,
then he can follow the same strategy on $(Q',E')$ and reach a configuration where the location is in $Q_Z$ and one of the two counters is zero.
At this point, he can go to the location where he resets the second counter and, after that, go to $\bot$ and win.
Conversely, if Reacher has a winning strategy in the game on $(Q',E')$,
then he can enforce that the play visits $Q_Z \times ((\{0\} \times \N) \cup (\N \times \{0\}))$,
as this is the only possibility to reach a location $\emptyi_i$ with the $(3-i)^{\text{th}}$ counter at zero and,
after that, to reach the objective.
\qed
\end{proof}

\begin{theorem}\label{crg2_undec}
\pbcs{2}{1}{} is undecidable.
\end{theorem}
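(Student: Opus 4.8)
The plan is to chain Theorem~\ref{eVASS_undec}, Proposition~\ref{eVASS_VASS} and Proposition~\ref{VASS_crg}, keeping track at every step that the \shrtrg\ property is maintained and that the objective value stays in $\{-1,0,1\}^2$.

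By Theorem~\ref{eVASS_undec}, deciding the winner of a reachability game on a \shrtrgd\ \vass\ of dimension two with objective $Q_Z \times ((\{0\}\times\N)\cup(\N\times\{0\}))$ is undecidable. The first step is to turn this set objective into a single-configuration objective. Apply Proposition~\ref{eVASS_VASS}: it produces a \vass\ $(Q',E')$ of dimension two (with $\bot \in Q' \setminus Q$) such that Reacher wins the original game if, and only if, he wins on $(Q',E')$ with objective $(\bot,(0,0))$. Since the \shrtrg\ property is preserved by this construction (all new edge labels lie in $\{-1,0,1\}$), $(Q',E')$ is again \shrtrgd. Hence \pbvass{2}{1}{(0,0)} is undecidable.

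The second step is to pass from the \vass\ semantics to the $\Z$ semantics, i.e.\ to a general counter system. Apply Proposition~\ref{VASS_crg} with $d=2$: \pbvass{2}{}{(0,0)} reduces in polynomial time to \pbcs{2}{}{(0,0)}. Crucially, the objective here is $(0,0)$, whose entries all lie in $\{-1,0,1\}$, so the proviso in Proposition~\ref{VASS_crg} applies and the reduction sends a \shrtrgd\ \vass\ to a \shrtrgd\ counter system. Thus the undecidability of \pbvass{2}{1}{(0,0)} transfers to \pbcs{2}{1}{(0,0)}, which is an instance of \pbcs{2}{1}{} since the latter allows an arbitrary objective. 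This proves the theorem.

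There is no genuinely hard step: the argument is a composition of reductions already in hand. The only point that requires care --- and the thing I would double-check --- is the persistence of the \shrtrg\ property across the two reductions. Each reduction preserves short-rangedness only under the hypothesis that the objective vector has entries in $\{-1,0,1\}$; since the objective is $(0,0)$ throughout, both hypotheses are met, and no edge with a label outside $\{-1,0,1\}^2$ is ever introduced.
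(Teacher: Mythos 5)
Your proof follows exactly the paper's argument: Theorem~\ref{eVASS_undec} composed with Proposition~\ref{eVASS_VASS} and then Proposition~\ref{VASS_crg}, which is precisely the two-step reduction the author gives. Your additional check that the \shrtrg\ property and the objective value $(0,0)$ survive both reductions is a correct and welcome elaboration of what the paper leaves implicit.
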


\begin{proof}
We make two successive reductions from the decision problem of Theorem~\ref{eVASS_undec} 
using Propositions~\ref{eVASS_VASS} and~\ref{VASS_crg}. 
\qed
\end{proof}

\section{Counter reachability games in dimension one}

In \cite{BJK10}, counter reachability games are played on \shrtrgd\ \vass,
where the winning condition in the one-dimensional case is to reach $Q_Z \times \{0\}$ for a given subset $Q_Z$ of $Q$.
It can be seen as the objective $(\bot,0)$, once we add a gadget that permits Reacher
to go from any location in $Q_Z$ to $\bot$ without any further modification of the counter value.
The decision problem is \psc\ in general and it is in \ptm\ when $Q_Z = Q$.

In this section, we establish mutual reductions between the decision problem for counter reachability games
under the three semantics in dimension one. The~complexity classes follow from the reductions.

\subsection{Relative integers semantics}

We recall that Proposition~\ref{VASS_crg} implies that there is a polynomial-time reduction from \pbvass{1}{1}{0} 
to \pbcs{1}{1}{0}, hence \pbcs{1}{1}{0} is \psh.

The main idea of the construction in this section
is to simulate, with nonnegative integers only, a counter value in $\Z$.
For this purpose, we use two copies of the set of locations
and explain how to move from one copy to another.

\begin{theorem}\label{crg1_vass}
\pbcs{1}{1}{0} is \psc.
\end{theorem}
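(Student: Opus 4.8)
The plan is to prove the two bounds separately and to concentrate effort on the upper bound. For \psh, nothing new is needed: Proposition~\ref{VASS_crg} already yields a polynomial-time reduction from \pbvass{1}{1}{0} to \pbcs{1}{1}{0}, and the source problem is \psh\ by \cite{BJK10}. The real work is the matching upper bound, which I obtain by a polynomial-time reduction in the reverse direction, from \pbcs{1}{1}{0} to \pbvass{1}{1}{0}; since the latter is in \psp\ by \cite{BJK10}, membership follows.

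For the reduction I simulate a counter ranging over $\Z$ by a nonnegative counter together with a recorded sign, using two copies $Q^+$ and $Q^-$ of the locations. I encode a configuration $(q,c)$ with $c \geq 0$ as $(q^+,c)$ and with $c<0$ as $(q^-,-c-1)$, so that the stored value is always nonnegative, the two encodings are disjoint, and the unique encoding of $c=0$ is $(q^+,0)$; after assuming (as in the proof of Proposition~\ref{VASS_crg}) that $q_f$ is a Reacher location, I take $(q^+_f,0)$ as the VASS objective. Every edge translates copy by copy: a $+1$ edge inside $Q^+$ and a $-1$ edge inside $Q^-$ keep their label, and a $-1$ edge inside $Q^+$ (resp. a $+1$ edge inside $Q^-$) keeps its label as long as the stored value is positive, since the \vass\ semantics then reproduces the move exactly and only disables it at the boundary. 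The only delicate transitions are the two sign crossings, which both occur precisely when the stored value is $0$: a $-1$ edge must send $(q^+,0)$ to $(q'^-,0)$, and a $+1$ edge must send $(q^-,0)$ to $(q'^+,0)$, both with label $0$.

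The main obstacle is that a \vass\ cannot test a counter for zero, so I cannot simply add these label-$0$ crossing edges: a player could fire one while the stored value is positive, corrupting the simulated value. I resolve this with a verification gadget in the spirit of the check locations of Proposition~\ref{VASS_crg}. When the owner $P$ of a crossing edge takes it (thereby claiming the value is $0$), control passes to the opponent $\bar P$, who may either accept, continuing the simulation in the other copy with a label-$0$ move, or challenge, entering a small sub-game testing the claim. If $P$ is Reacher, the sub-game is built so that Reacher wins iff the value is $0$: I route Reacher to a location whose only edge leads to the objective with label $0$, so the objective value $0$ is met exactly when the stored value is $0$. If $P$ is Opponent, the sub-game is built so that Reacher wins iff the value is positive: I force Reacher to perform a decrement (which is disabled, hence losing, exactly when the value is $0$) before being allowed to drain the counter to $0$ and reach the objective. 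All gadget locations are owned so that the challenge is decisive.

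Correctness then amounts to checking that, under optimal play, the challenge serves exactly to punish an illegal crossing. If $P$ crosses honestly ($v=0$), challenging gains $\bar P$ nothing, so $\bar P$ accepts and the play proceeds faithfully; if $P$ crosses while $v>0$, challenging lets $\bar P$ win the sub-game, so such a move never belongs to a winning strategy. Hence Reacher wins the \vass\ game if and only if he wins the original $\Z$-semantics game. Finally the construction adds only two copies of the locations together with a constant-size gadget per crossing, so it is polynomial, and all new labels lie in $\{-1,0,1\}$, so the \shrtrg\ property is preserved. I expect the soundness of the two verification gadgets, and the verification that no spurious winning or losing position is introduced, to be the step requiring the most care.
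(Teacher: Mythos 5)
Your proposal follows essentially the same route as the paper: \psp-hardness via Proposition~\ref{VASS_crg}, and membership via a polynomial reduction to \pbvass{1}{1}{0} that uses two copies of the locations together with adversary-owned gadgets punishing any player who crosses between copies while the counter is nonzero (the paper keeps $0$ representable in both copies instead of offsetting the negative copy by one, but its check gadgets --- a label-$0$ move to $\bot$ for a Reacher crossing, a forced decrement followed by draining to $0$ for an Opponent crossing --- are exactly yours). One slip to fix: with your encoding $(q^-,-c-1)$ the edges inside $Q^-$ must carry the \emph{negated} label (a $-1$ edge increases the stored value), so they do not ``keep their label'' as written.
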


\begin{proof}
We reduce \pbcs{1}{1}{0} to \pbvass{1}{1}{0} in polynomial time.
Consider a~reachability game on a \shrtrgd\ counter system $(Q,E)$,
where the objective is $(q_f,0)$, with $q_f \in Q_1$.
Note that when the objective counter value is not $0$, we can always shift initial and objective value in a general counter system.

Let $Q_+ = \{q_+\ |\ q \in Q\}$ and $Q_- = \{q_-\ |\ q \in Q\}$ be two copies of $Q$,
and let $Q_E$ be the set $\{q_e\ |\ \exists p, q \in E, v \in \{\pm 1\}, e = (p,v,q) \in E\}$.
We build the \shrtrgd\ \vass\ $(Q',E')$, where $Q' = Q_+ \cup Q_- \cup Q_E \cup \{\text{no},\bot\}$ is partitioned into
$Q'_1 = \{q_+,q_-\ |\ q \in Q_1\} \cup \{q_e \in Q_E\ |\ e \in Q_2 \times \{0,\pm 1\} \times Q\} \cup \{\text{no},\bot\}$ and $Q'_2$.
The set of edges $E'$ contains two copies of $E$, i.e., edges $(p_+,v,q_+)$ and $(p_-,-v,q_-)$ for each edge $(p,v,q) \in E$.
The other edges of $E'$ are used to move between $Q_+$ and $Q_-$ via the new locations of $Q_E$,
as depicted in Figures~\ref{crg_vass_reach} and~\ref{crg_vass_opp}. 

More precisely, $E'$ is the union of the following sets of edges:
\begin{itemize}
\item $\{(p_+,v,q_+),(p_-,-v,q_-)\ |\ (p,v,q) \in E\}$;
\item $\{(p_-,0,q_e), (q_e,0,\bot), (q_e,1,q_+)\ |\ e = (p,1,q) \in E, p \in Q'_1\}$;
\item $\{(p_+,0,q_e), (q_e,0,\bot), (q_e,1,q_-)\ |\ e = (p,-1,q) \in E, p \in Q'_1\}$;
\item $\{(p_-,0,q_e), (q_e,-1,\text{no}), (q_e,1,q_+)\ |\ e = (p,1,q) \in E, p \in Q'_2\}$;
\item $\{(p_+,0,q_e), (q_e,-1,\text{no}), (q_e,1,q_-)\ |\ e = (p,-1,q) \in E, p \in Q'_2\}$;
\item $\{(\text{no},-1,\text{no}), (\text{no},0,\bot), (q_{f,+},0,\bot), (q_{f,-},0,\bot), (\bot,0,\bot)\}$.
\end{itemize}

The \vass\ $(Q',E')$ is designed such that a play in it corresponds to a play in the counter system $(Q,E)$.
Hence, a configuration $(q,x) \in Q \times -\N$ in $(Q,E)$
is associated to the configuration $(q_-,-x) \in Q_- \times \N$ in $(Q',E')$.
That is why the labels of the edges between locations in $Q_-$ are the opposite of the labels of the edges in $Q$.

The objective of the game on~$(Q',E')$ is~$(\bot,0)$.
In fact, Reacher loses whenever a play reaches $\bot$ with another counter value.
Furthermore, if a player makes a move to a location $q_e$ in $Q_E$ and the counter value is not~$0$,
then his adversary, who owns $q_e$, has a winning move.
\qed

\begin{figure}
\begin{center}
\begin{tikzpicture}[->,>=stealth',shorten >=1pt,auto,node distance=2cm,
                    semithick]
  \tikzstyle{state}=[circle,minimum size=8mm,fill=white,draw=black,text=black,font=\small]
  \tikzstyle{oppstate}=[rectangle,minimum size=8mm,fill=white,draw=black,text=black,font=\small]
  \tikzstyle{anystate}=[diamond,minimum size=8mm,fill=white,draw=black,text=black,font=\small]

  \node[state] (A)                    {$p_-$};
  \node[state] (B) [below of =A]      {$p_+$};
  \node[anystate] (C) [right of=A]    {$q_-$};
  \node[oppstate] (D) [left of=B] {$q_e$};
  \node[anystate] (E) [right of=B] {$q_+$};
  \node[state] (F) [above of=D] {$\bot$};

  \path (A) edge node {$1$} (C)
        (B) edge node {$-1$} (E)
            edge node {$0$} (D)
        (D) edge node {$0$} (F)
            edge node {$1$} (C)
        (F) edge [loop left] node {$0$} (F);

\end{tikzpicture}
\end{center}
\caption{Gadget to replace an edge $e = (p,-1,q)$ from a Reacher location
in the reduction from \pbcs{1}{1}{0} to \pbvass{1}{1}{0}.}
\label{crg_vass_reach}
\end{figure}

\begin{figure}
\begin{center}
\begin{tikzpicture}[->,>=stealth',shorten >=1pt,auto,node distance=2cm,
                    semithick]
  \tikzstyle{state}=[circle,minimum size=8mm,fill=white,draw=black,text=black,font=\small]
  \tikzstyle{oppstate}=[rectangle,minimum size=8mm,fill=white,draw=black,text=black,font=\small]
  \tikzstyle{anystate}=[diamond,minimum size=8mm,fill=white,draw=black,text=black,font=\small]

  \node[oppstate] (A)                    {$p_+$};
  \node[oppstate] (B) [below of =A]      {$p_-$};
  \node[anystate] (C) [right of=A]    {$q_+$};
  \node[state] (D) [left of=B] {$q_e$};
  \node[anystate] (E) [right of=B] {$q_-$};
  \node[state] (F) [above of=D] {no};
  \node[state] (G) [left of=F] {$\bot$};

  \path (A) edge node {$1$} (C)
        (B) edge node {$-1$} (E)
            edge node {$0$} (D)
        (D) edge node {$-1$} (F)
            edge node {$1$} (C)
        (F) edge [loop right] node {$-1$} (F)
            edge node {$0$} (G)
        (G) edge [loop below] node {$0$} (G);

\end{tikzpicture}
\end{center}
\caption{Gadget to replace an edge $e = (p,1,q)$ from an Opponent location
in the reduction from \pbcs{1}{1}{0} to \pbvass{1}{1}{0}.}
\label{crg_vass_opp}
\end{figure}
\end{proof}

A consequence of Theorem~\ref{crg1_vass} is that \pbcs{1}{}{} is in \xps:
It suffices to split every edge with another label than $-1$, $0$ or $1$.
However, we do not know yet whether \xps\ is an optimal upper bound, but we have the following lower bound.

\begin{theorem}[\cite{JLS07,AR13}]\label{crg1_expthard}
\pbcs{1}{}{} is \xpth.
\end{theorem}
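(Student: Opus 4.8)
The plan is to recognise \pbcs{1}{}{} as a generalisation of two already-studied one-dimensional games, each known to be \xpth, so that the lower bound transfers by a direct embedding. I would lead with robot games on the integer line from \cite{AR13}, since these live natively over $\Z$ and hence match the $\Z$ semantics exactly; countdown games \cite{JLS07} provide an independent alternative.

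Recall that a robot game on the integer line is given by two finite sets $U,V \subseteq \Z$ of moves and a starting integer $c_0$. A round proceeds by Opponent adding some $u \in U$ to the current value, then Reacher adding some $v \in V$, and Reacher wins if the value equals $0$ at the end of a round. First I would build a two-location counter system of dimension one: an Opponent location $p \in Q_2$ and a Reacher location $q \in Q_1$, with an edge $(p,u,q)$ for every $u \in U$ and an edge $(q,v,p)$ for every $v \in V$. Taking initial configuration $(p,c_0)$ and objective $(p,0)$, the location $p$ is reached exactly after a Reacher move, so visiting the objective coincides precisely with the event ``the counter is $0$ at the end of a round''. Hence Reacher wins this counter reachability game from $(p,c_0)$ if, and only if, Reacher wins the robot game from $c_0$.

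The reduction is polynomial in the size of the robot game; the resulting system has dimension one and is played under the $\Z$ semantics, where the counter ranges freely over $\Z$ exactly as the integer line does, so no move is ever disabled and no gadget from Proposition~\ref{VASS_crg} is needed. Crucially, the labels $u,v$ are arbitrary integers, which is why the image is an instance of \pbcs{1}{}{} and not of its \shrtrgd\ restriction. Since deciding the winner of a robot game on the integer line is \xpth\ \cite{AR13}, it follows that \pbcs{1}{}{} is \xpth; a similar encoding of countdown games, representing each positive duration as a counter decrement and targeting counter value $0$, yields the same conclusion from \cite{JLS07}.

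I expect the only delicate point --- the closest thing to an obstacle --- to be matching the winning conventions faithfully: one must ensure that the singleton objective $(p,0)$ is hit at exactly those configurations at which the robot game declares Reacher the winner, so that the turn alternation and the ``end of round'' timing are reproduced without introducing spurious wins (for instance by fixing the treatment of the degenerate case $c_0 = 0$). The analogous check is slightly more involved for the countdown encoding, where the two-step ``choose a duration, then choose a move of that duration'' structure must be threaded through intermediate locations while keeping counter value $0$ reachable only through legitimate countdown moves.
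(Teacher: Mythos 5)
Your proposal is correct and follows exactly the route the paper takes: the theorem is stated as inherited from countdown games \cite{JLS07} and robot games \cite{AR13}, which are expressed as counter reachability games in dimension one, and your explicit two-location encoding of robot games (plus the sketched countdown encoding) simply spells out the embeddings the paper leaves implicit. The care you take with the turn alternation, the degenerate case $c_0 = 0$, and the fact that arbitrary integer labels place the image in \pbcs{1}{}{} rather than its \shrtrgd\ restriction is exactly what is needed.
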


This lower bound is inherited from countdown games \cite{JLS07} and robot games \cite{AR13},
which we can express as counter reachability games.

\subsection{\Wkvass\ semantics}

When we simulate a game on a \wkvass, we need, like for \vass, to handle the behaviour around the value $0$.
The idea is the following: For every edge labelled by $-1$ in a \shrtrgd\ \wkvass,
there are two choices for Opponent in the \vass: decrement the counter or leave it unchanged,
depending on whether it is positive or zero.
The winning condition is designed so that Reacher has a checking move that makes him win
whenever Opponent chooses the wrong move, e.g., he leaves the counter unchanged whereas he should decrement it.
Moreover, Opponent wins if Reacher abuses his checking move.

\begin{theorem}\label{wkvass_VASS}
\pbwkvass{1}{1}{1} is \psc.
\end{theorem}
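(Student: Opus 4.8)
The plan is to prove that \pbwkvass{1}{1}{1} is \psc\ by two polynomial-time reductions that trade the \wkvass\ semantics for the \vass\ semantics, exploiting that \pbvass{1}{1}{0} is already \psc\ \cite{BJK10}. In both directions the counter value will be tracked identically, so that the whole difficulty is concentrated at an edge labelled $-1$ taken with counter value~$0$: the \vass\ disables it, whereas the \wkvass\ leaves the value at~$0$. As in earlier proofs we may assume that $q_f$ is a Reacher location. Throughout, a fresh absorbing target $\bot$ (with a $0$-labelled self-loop) carries the objective, and a player who cannot move loses.

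For the upper bound I would reduce \pbwkvass{1}{1}{1} to \pbvass{1}{1}{0}. For each edge $(p,-1,q)$ I give the owner of~$p$ a choice in the \vass\ between a genuine decrement, which is available only when the counter is positive, and a \emph{clamping move} labelled~$0$ asserting that the counter is already~$0$. The clamping move first reaches an intermediate location owned by the adversary, who may either continue the simulation to~$q$ or \emph{challenge} the assertion by entering a checking gadget. I use two gadgets: one in which Reacher wins if and only if the value is~$0$ (a single $0$-edge to $\bot$, so $(\bot,0)$ is reached exactly when the value is~$0$), and one in which Reacher wins if and only if the value is positive (a forced $-1$ edge, enabled only above~$0$, followed by a loop that empties the counter and reaches $(\bot,0)$). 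Offering the adversary of~$p$ the gadget appropriate to the ownership of~$p$ makes an honest clamp survive a challenge while a dishonest one is punished, so the simulation is faithful; the objective $(q_f,1)$ is encoded by a single $-1$ edge from~$q_f$ to $\bot$, so $(\bot,0)$ is reached precisely when $q_f$ is visited with value~$1$.

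For the lower bound I would reduce \pbvass{1}{1}{0} to \pbwkvass{1}{1}{1}. I shift the counter by one, so a \vass\ value~$c$ becomes a \wkvass\ value~$c+1$ and the objective value~$0$ becomes~$1$; every faithfully simulated configuration then has a positive value and no spurious clamping occurs. The only divergence is again a $-1$ edge taken when the \vass\ value is~$0$, that is, when the \wkvass\ value is~$1$ and the decrement would produce~$0$. I route each decrement through an intermediate adversary-owned location from which the adversary may challenge a decrement that produced value~$0$ --- an \emph{illegal} move mirroring a disabled \vass\ transition --- using the same two checking gadgets, now adapted to objective value~$1$: one in which Reacher wins if and only if the value is~$0$ (a forced $+1$ reaching $(\bot,1)$) and one in which Reacher wins if and only if the value is positive (a $-1$ loop bringing the value down to~$1$ before reaching $\bot$). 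As before, the gadget offered to the challenger is selected by the owner of the decremented edge, so an illegal decrement by Reacher is punished by Opponent and an illegal decrement by Opponent lets Reacher win; the \vass\ objective $(q_f,0)$ is encoded by a $0$-edge from the copy of~$q_f$ to $\bot$, reaching $(\bot,1)$ exactly when $q_f$ is visited with value~$1$.

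The crux of the argument is the design and verification of these two checking gadgets. The main obstacle will be to show that a challenge is never profitable against an honest move and is always decisive against a dishonest one, in all four combinations of reduction direction and edge ownership; in particular, challenging a \emph{legal} move must hand the challenger's adversary the game, which is exactly what deters both players from cheating, while the threat of a challenge is never actually exercised along optimal play. A secondary point is the deadlock convention: a \vass\ location whose only outgoing edges are decrements disabled at value~$0$ must correspond to a forced, and punishable, clamping step in the \wkvass, and the shift by one in the lower bound keeps all honest values positive so that this boundary is the only place where the two semantics can disagree. Together the two reductions give \psh\ and membership in \psp, hence \pbwkvass{1}{1}{1} is \psc.
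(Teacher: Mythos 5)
Your proposal is correct and follows essentially the same route as the paper: mutual polynomial-time reductions with \pbvass{1}{1}{0}, with all the work concentrated in a local assert-and-challenge gadget at each $-1$ edge, built from one check module that is winning for Reacher exactly at counter value $0$ and one that is winning exactly at positive values, chosen according to the owner of the edge, and with the same encodings of the objective via a $\pm1$ edge into an absorbing $\bot$. Your deviations --- letting the mover at $p$ assert a clamp rather than having Opponent guess zero-versus-positive at a dedicated intermediate location, shifting the counter by one in the hardness direction instead of keeping values aligned, and relying on a deadlock-loses convention where the paper always supplies an enabled escape edge --- are cosmetic and do not affect correctness.
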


\begin{proof}[\psp-hardness]
We reduce \pbwkvass{1}{1}{1} to \pbvass{1}{1}{0} in polynomial time.
Consider a~reachability game on a \shrtrgd\ \wkvass\ $(Q,E)$, where the objective is $(q_f,1)$,
with the assumption that $q_f \in Q_1$.
Let $Q_E$ be the set $\{q_e,q^{>0}_e,q^{=0}_e\ |\ e \in E \cap (Q \times \{-1\} \times Q)\}$.
We build the \shrtrgd\ \vass\ $(Q',E')$, where $Q' = Q \cup Q_E \cup \{\text{no},\bot\}$ is partitioned into
$Q'_1 = Q_1 \cup \{q^{>0}_e,q^{=0}_e\ |\ e \in E\} \cup \{\text{no},\bot\}$ and $Q'_2$. The set of edges~is
\begin{align*}
E' = & \{(p,v,q)\ |\ (p,v,q) \in E, v \in \{0,1\}\}\\
& \cup\ \{(p,0,q_e), (q_e,0,q^{>0}_e), (q_e,0,q^{=0}_e), (q^{=0}_e,0,q), (q^{>0}_e,-1,q),\\
& \hspace*{5mm} (q^{>0}_e,0,\bot), (q^{=0}_e,-1,\bot)\ |\ e = (p,-1,q) \in E\}\\
& \cup\ \{(\text{no},-1,\text{no}), (\text{no},0,\bot), (q_f,-1,\bot), (\bot,0,\bot)\}.
\end{align*}

Intuitively, every time a play visits an edge with a decrement in $(Q',E')$,
Opponent has to guess whether the counter value is zero or positive, and move accordingly to an intermediate location,
where Reacher can move to the actual target of the edge in $(Q,E)$ or to a checking module where the game ends.

The objective of the game on~$(Q',E')$ is~$(\bot,0)$.
As we can see in Figure~\ref{wkvass_vass_fig}, Reacher has a winning strategy in every location $q^{=0}_e$ 
when the counter value is positive, and in every location $q^{>0}_e$ when the counter value is zero.
\qed

\begin{figure}
\begin{center}
\begin{tikzpicture}[->,>=stealth',shorten >=1pt,auto,node distance=2cm,
                    semithick]
  \tikzstyle{state}=[circle,minimum size=8mm,fill=white,draw=black,text=black,font=\small]
  \tikzstyle{oppstate}=[rectangle,minimum size=8mm,fill=white,draw=black,text=black,font=\small]
  \tikzstyle{anystate}=[diamond,minimum size=8mm,fill=white,draw=black,text=black,font=\small]

  \node[anystate] (A)                    {$p$};
  \node[oppstate] (B) [right of =A]      {$q_e$};
  \node[state] (C) [above right of=B]    {$q^{=0}_e$};
  \node[state] (D) [below right of=B] {$q^{>0}_e$};
  \node[anystate] (E) [right of=B] {$q$};
  \node[state] (F) [right of=E] {no};
  \node[state] (G) [right of=F] {$\bot$};

  \path (A) edge node {$0$} (B)
        (B) edge node {$0$} (C)
            edge [swap] node {$0$} (D)
        (C) edge [swap] node {$0$} (E)
            edge node {$-1$} (F)
        (D) edge node {$-1$} (E)
            edge [swap] node {$0$} (G)
        (F) edge [loop above] node {$-1$} (F)
            edge node {$0$} (G)
        (G) edge [loop above] node {$0$} (G);

\end{tikzpicture}
\end{center}
\caption{Gadget to replace an edge $e = (p,-1,q)$
in the reduction from \pbwkvass{1}{1}{1} to \pbvass{1}{1}{0}.}
\label{wkvass_vass_fig}
\end{figure}
\end{proof}

In the construction for the reverse reduction, when a player chooses any edge with a negative label
and the counter value is less than the value that should be subtracted,
then the adversary of this player has a winning move.
Whereas this is no problem in a \wkvass, such an edge would be forbidden in a \vass.

\begin{proof}[\psp-membership]
We show a polynomial-time reduction, that preserves the \shrtrg\ property, from \pbvass{1}{}{0} to \pbwkvass{1}{}{1}.
Consider a reachability game on a \vass\ $(Q,E)$, where the objective is $(q_f,0)$,
with $q_f \in Q_1$. Let $Q_E$ be the set $\{q_e\ |\ e \in E \cap (Q \times (\Z \setminus \N) \times Q)\}$.
We build the \wkvass\ $(Q',E')$, where $Q' = Q \cup Q_E \cup \{\text{no}_\text{R},\text{no}_\text{O},\bot\}$,
$Q'_1 = Q_1 \cup \{q_e \in Q_E\ |\ e \in Q_2 \times \Z \times Q\} \cup \{\text{no}_\text{R},\text{no}_\text{O},\bot\}$, $Q'_2 = Q' \setminus Q'_1$, and
$E'$ is obtained from $E$ by splitting every edge $(p,v,q)$ such that $v \in -\N$ into two edges $(p,0,q_e)$ and $(q_e,v,q)$
and by adding an edge from every location $q_e$ to the ``no''-location that corresponds to the owner of $p$,
as well as additional edges between no$_O$, no$_R$ and $\bot$, as depicted in the Figures~\ref{vass_wkvass_reach} and~\ref{vass_wkvass_opp}. 

More precisely, $E'$ is the union of the sets of edges:
\begin{itemize}
\item $\{(p,v,q)\ |\ (p,v,q) \in E, x \in \N\}$;
\item $\{(p,0,q_e), (q_e,v,q)\ |\ e = (p,v,q) \in E, x < 0\}$;
\item $\{(q_e,x+1,\text{no}_\text{R})\ |\ e = (p,v,q) \in E, v < 0, p \in Q_1\}$;
\item $\{(q_e,x+1,\text{no}_\text{O})\ |\ e = (p,v,q) \in E, v < 0, p \in Q_2\}$;
\item extra edges $\{(\text{no}_\text{R},-1,\text{no}_\text{R}), (\text{no}_\text{R},0,\bot), (\text{no}_\text{O},1,\bot), (q_f,1,\bot), (\bot,0,\bot)\}$.
\end{itemize}

\begin{figure}
\begin{center}
\begin{tikzpicture}[->,>=stealth',shorten >=1pt,auto,node distance=2cm,
                    semithick]
  \tikzstyle{state}=[circle,minimum size=8mm,fill=white,draw=black,text=black]
  \tikzstyle{oppstate}=[rectangle,minimum size=8mm,fill=white,draw=black,text=black]
  \tikzstyle{anystate}=[diamond,minimum size=8mm,fill=white,draw=black,text=black]

  \node[state] (A)                    {$p$};
  \node[oppstate]         (B) [right of=A] {$q_e$};
  \node[state]         (C) [right of=B] {no$_\text{R}$};
  \node[state]         (D) [right of=C] {$\bot$};
  \node[anystate]         (E) [below of=B] {$q$};

  \path (A) edge node {$0$} (B)
        (B) edge node {$-4$} (C)
            edge node {$-5$} (E)
        (C) edge [loop below] node {$-1$} (C)
            edge node {$0$} (D)
        (D) edge [loop below] node {$0$} (D);

\end{tikzpicture}
\end{center}
\caption{Gadget to replace an edge $e = (p,-5,q)$ from a Reacher location
in the reduction from \pbvass{1}{1}{0} to \pbwkvass{1}{1}{1}.}
\label{vass_wkvass_reach}
\end{figure}

\begin{figure}
\begin{center}
\begin{tikzpicture}[->,>=stealth',shorten >=1pt,auto,node distance=2cm,
                    semithick]
  \tikzstyle{state}=[circle,minimum size=8mm,fill=white,draw=black,text=black]
  \tikzstyle{oppstate}=[rectangle,minimum size=8mm,fill=white,draw=black,text=black]
  \tikzstyle{anystate}=[diamond,minimum size=8mm,fill=white,draw=black,text=black]

  \node[oppstate] (A)                    {$p$};
  \node[state]         (B) [right of=A] {$q_e$};
  \node[state]         (C) [right of=B] {no$_\text{O}$};
  \node[state]         (D) [right of=C] {$\bot$};
  \node[anystate]         (E) [below of=B] {$q$};
  
  \path (A) edge node {$0$} (B)
        (B) edge node {$-4$} (C)
            edge node {$-5$} (E)
        (C) edge node {$1$} (D)
        (D) edge [loop below] node {$0$} (D);

\end{tikzpicture}
\end{center}
\caption{Gadget to replace an edge $e = (p,-5,q)$ from an Opponent location
in the reduction from \pbvass{1}{1}{0} to \pbwkvass{1}{1}{1}.}
\label{vass_wkvass_opp}
\end{figure}

The \wkvass\ $(Q',E')$ is designed such that a play in it corresponds to a play in the \vass\ $(Q,E)$.
Let us consider a location $q_e \in Q_E$, for an edge $(p,v,q)$ in $E$.
Note that $v < 0$ and that the owner of $q_e$ is not the owner of $p$.
In~the play on the \vass, the edge $(p,v,q)$ can only be taken if the counter value is at least $-v$.
If a player goes to $q_e$, i.e., simulates the choice of the edge $(p,v,q)$,
his adversary should win whenever the counter value is less than $-v$,
by going to a ``no''-location, as we can see in the Figures~\ref{vass_wkvass_reach} and~\ref{vass_wkvass_opp}. 
\qed
\end{proof}

\subsection{The case of zero-reachability on \wkvass}

For \wkvass, we prove that the set of winning configurations is downward closed when the reachability objective is $(q_f,0)$ for a given $q_f$.
Hence, to decide whether Reacher has a winning strategy, we compute for all locations the maximal initial value
for which the pair (location, value) is winning and we look at the initial configuration.

\begin{lemma}\label{wkvass_dwnwcl}
Let $(Q,E)$ be a \wkvass. Consider a reachability game on $(Q,E)$, where the objective is $(q_f,0)$, where $q_f \in Q$.
If the initial configuration $(q_0,x)$ is winning, then every configuration $(q_0,x')$ for $x' < x$ is winning.
\end{lemma}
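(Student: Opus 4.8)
The plan is to argue by strategy transfer: from a winning strategy $s$ for Reacher in the game started at $(q_0,x)$ I will build a winning strategy $s'$ for the game started at $(q_0,x')$, for an arbitrary $x'<x$. The guiding intuition is that in a \wkvass\ the update induced by an edge of label $v$ is the clamped map $t\mapsto\max(0,t+v)$, which is non-decreasing in $t$; hence a smaller counter value can never be harder to push down to the target value $0$. Recall that counter values in a \wkvass\ are nonnegative, so $x,x'\ge 0$ and in particular $x\ge 1$.

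Concretely, I would run two plays in lockstep: a \emph{virtual} play started at $(q_0,x)$ in which Reacher follows $s$, and the \emph{real} play started at $(q_0,x')$ whose strategy $s'$ I am defining. I maintain the invariant that after every step the two plays sit at the same location $q_i$ and that the real counter value $y'_i$ and the virtual counter value $y_i$ satisfy $y'_i\le y_i$; this holds initially since $x'<x$ at the common location $q_0$. The coupling is the obvious one: at a Reacher location $s$ prescribes an edge $(q_i,v,q_{i+1})$ in the virtual play and $s'$ plays the \emph{same} edge in the real play, while at an Opponent location I copy whichever edge Opponent plays in the real game into the virtual play. Because a \wkvass\ is non-blocking, every edge is always available, so $s'$ is well defined (a real play prefix deterministically reconstructs the corresponding virtual prefix, to which $s$ is applied) and the two plays never desynchronise on locations.

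It then remains to check preservation of the invariant and that it forces a win. Preservation is immediate from monotonicity: whichever player moves, both plays traverse the same edge, so $y'_{i+1}=\max(0,y'_i+v)\le\max(0,y_i+v)=y_{i+1}$ using $y'_i\le y_i$. Now fix any Opponent strategy in the real game; the moves it produces, once copied, constitute some Opponent behaviour in the virtual game, which the winning strategy $s$ defeats, so the virtual play reaches $(q_f,0)$ at some step $i$. Since $x\ge 1$, this step satisfies $i\ge 1$, so at least one edge has been taken and clamping gives $y'_i\ge 0$; combined with the invariant $y'_i\le y_i=0$ this squeezes $y'_i=0$, whence the real play also visits $(q_f,0)$ and $s'$ is winning. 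The crux of the argument, and the reason it is special to the target value $0$, is precisely this squeeze: the non-blocking lower bound and the objective coincide at $0$, so $0\le y'_i\le y_i=0$ can only be realised by $y'_i=0$; for a nonzero target the inequality $y'_i\le y_i$ would leave the real counter strictly below the objective, so downward closure genuinely relies on the target being $0$ (consistent with the contrasting status of \pbwkvass{1}{1}{1}). The main point to get right is thus not the inequality bookkeeping but the clean definition of the coupled strategy $s'$ as a function of real play prefixes, for which non-blocking semantics are essential since they keep every prescribed edge enabled.
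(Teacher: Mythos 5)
Your proof is correct and follows essentially the same route as the paper's: replay the winning strategy from $(q_0,x)$ against the coupled Opponent behaviour, use monotonicity of the clamped update $t\mapsto\max(0,t+v)$ to maintain $y'_i\le y_i$ at a common location, and squeeze $y'_i=0$ once the virtual play hits $(q_f,0)$. Your explicit lockstep definition of $s'$ on real play prefixes is in fact a slightly more careful formalization than the paper's, which informally applies the same strategy pair to both initial configurations even though strategies may depend on counter values.
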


\begin{proof}
Let $(q_0,x)$ be a winning configuration, and let $s$ be a winning strategy for Reacher from $(q_0,x)$.
Consider any strategy $s'$ for Opponent.
The outcome of the strategies $s$ and $s'$ from $(q_0,x)$ is a play $\pi$ that Reacher wins, i.e., the play $\pi$ eventually visits $(q_f,0)$.
Now, let us look at the outcome of the strategies $s$ and $s'$ from $(q_0,x')$ for $x' < x$.
It is a play $\pi'$ that visits the same locations as $\pi$, and no edge is disabled because of the semantics of a \wkvass.
Moreover, the counter value in $\pi'$ is after each move less than or equal to the counter value in the corresponding move of $\pi$.
In particular, $\pi'$ eventually visits $q_f$ with counter value $0$, hence Reacher wins.
\qed
\end{proof}

Algorithm \ref{algo} determines the winner of a reachability game on a \wkvass\ when the objective counter value is~$0$. 
Its time complexity is exponential in the initial counter value.
Accordingly, we call it only with~$0$ as initial counter value in the proof of Theorem~\ref{wkvass_0_P}. 

\begin{algorithm}
\KwData{A \wkvass\ $(Q,E)$, a location $q_f$, and a configuration $(q_0,x_0)$}
\KwResult{Does Reacher have a winning strategy to reach $(q_f,0)$ from $(q_0,x_0)$?}
\Begin{
Create a table $M_q$ with $q \in Q$ as indices initialized to $-\infty$\;
$M_{q_f}\gets 0$\;
\Repeat{a fixpoint is reached or $M_{q_0} \ge x_0$}{
  \ForEach{$e = (q,v,q')) \in E$}{$M_q \gets \max(M_q,M_q'-v)$}
}
\lIf{$M_{q_0} \ge x_0$}{\Return{true}}\;
\lElse{\Return{false}}\;
}
\label{algo}
\caption{Solves \pbwkvass{1}{}{0}.}
\end{algorithm}

\begin{theorem}\label{wkvass_0_P}
\pbwkvass{1}{1}{0} is in \ptm.
\end{theorem}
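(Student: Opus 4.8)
The plan is to turn the game into a fixpoint computation over the per-location winning thresholds and then argue that this fixpoint is reached in polynomially many steps. By Lemma~\ref{wkvass_dwnwcl} the winning region at each location $q$ is downward closed, so it is captured by a single threshold $M_q \in \{-\infty\} \cup \N \cup \{+\infty\}$ with the property that $(q,x)$ is winning if and only if $x \le M_q$. The whole decision problem then reduces to computing the vector $(M_q)_{q \in Q}$ and comparing the input value $x_0$ against $M_{q_0}$.

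First I would characterise the thresholds by a backward, attractor-style fixpoint, which is exactly what Algorithm~\ref{algo} iterates. Seed $M_{q_f}=0$ and $M_q=-\infty$ elsewhere, and repeatedly update each location from its successors: at a Reacher location $M_q$ is the maximum over outgoing edges $(q,v,q')$ of the largest $x$ for which $(q',\max(x+v,0))$ is winning, and at an Opponent location it is the minimum of the same quantities. Because the system is \shrtrgd, each such quantity equals $M_{q'}-v$ once $M_{q'}\ge 0$ (the clamping at $0$ only ever helps Reacher and does not change the formula), so every update moves a value by at most one. Soundness and completeness of this characterisation are routine inductions: a fixpoint witness yields a Reacher strategy, and conversely a winning play, followed along the rank-decreasing attractor strategy, forces $x\le M_q$.

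The heart of the matter, and what I expect to be the main obstacle, is the complexity bound: I must show the fixpoint stabilises after polynomially many rounds. The danger is that thresholds grow without bound, which would make the naive iteration take time proportional to $x_0$, hence exponential since $x_0$ is given in binary; this is precisely why Algorithm~\ref{algo} is only run from counter value~$0$. The way out is the clamping that separates a \wkvass\ from a \vass: if Reacher can force the counter strictly downwards along a cycle, then from any starting value he can drive the counter to~$0$, so the threshold there is $+\infty$. I would make this precise through the attractor ranks: under the rank-decreasing strategy the counter never exceeds the current rank, so a configuration that is winning with a counter value of order $|Q|$ must revisit some location at two values with a net decrease in between, and, because the intermediate play is governed by the winning strategy against every Opponent response, this decrement can be repeated to dump an arbitrary surplus to~$0$. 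The consequence is the key lemma: every finite threshold is bounded by a polynomial in $|Q|$ (I expect the bound $M_q<|Q|$), and any threshold that would exceed this bound is in fact $+\infty$.

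With this bound in hand the algorithm is easily seen to run in polynomial time: cap the table entries at $|Q|$, reading an entry that reaches the cap as $+\infty$, so that the monotone fixpoint of Algorithm~\ref{algo} converges after at most $O(|Q|^2)$ rounds, each costing $O(|E|)$. Finally I would answer the decision problem by the single comparison ``$(q_0,x_0)$ is winning if and only if $x_0\le M_{q_0}$'', where a capped entry counts as $+\infty$; this, together with the polynomial fixpoint computation, places \pbwkvass{1}{1}{0} in \ptm. The delicate point to get right in the write-up is the game-theoretic version of the pumping step, ensuring the forced decrement is available against \emph{all} Opponent strategies rather than merely on one play.
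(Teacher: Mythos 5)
Your route is genuinely different from the paper's. The paper first runs Algorithm~\ref{algo} only from counter value $0$ to compute the set $Q_Z$ of locations where $(q,0)$ is winning, then uses Lemma~\ref{wkvass_dwnwcl} to argue that a winning play may never leave $Q_Z$ and must hit counter value $0$ inside $Q_Z$, and finally hands the resulting game off to the known \ptm\ result of \cite{BJK10} for zero-reachability \vass\ games with objective $Q\times\{0\}$. You instead compute the full threshold vector $(M_q)_q$ directly by a capped max/min value iteration, which is self-contained and arguably more informative (it yields the exact winning region, and it correctly supplies the Opponent-side $\min$ that Algorithm~\ref{algo} as printed omits). The price is that you must prove the capping lemma yourself rather than inherit it from \cite{BJK10}: \emph{if $M_q\ge|Q|$ then $M_q=+\infty$}. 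That lemma is true, and granting it, your $O(|Q|^2)$ convergence bound and the final comparison $x_0\le M_{q_0}$ are fine.

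The gap is exactly where you suspected it, and the fix is not a refinement of your argument but a different one. Pigeonholing a single play consistent with Reacher's winning strategy does give two visits to some location $p$ at counter values $y>y'$, but that descending segment was produced against \emph{one} Opponent strategy; when you restart from $(p,z)$ and try to ``repeat the decrement,'' Opponent is free to answer differently, and a strategy is not a path that can be concatenated with itself. The standard repair argues on cycles rather than on plays, via positional determinacy of the associated mean-payoff/energy game on $Q_Z$ (after using Lemma~\ref{wkvass_dwnwcl} to reduce the objective to ``reach counter value $0$ anywhere in $Q_Z$ without leaving $Q_Z$''): either Opponent has a \emph{memoryless} strategy under which every cycle has nonnegative total weight --- in which case, since labels are in $\{-1,0,1\}$, the counter never drops more than $|Q|-1$ below its initial value, so $(q,x)$ is losing for all $x\ge|Q|$ and the finite thresholds are $<|Q|$ --- or Reacher has a memoryless strategy making every cycle strictly negative, which drives the (unclamped) counter to $-\infty$, hence the clamped counter to $0$ inside $Q_Z$, giving $M_q=+\infty$. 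This dichotomy is a genuine extra ingredient (Ehrenfeucht--Mycielski-style positional determinacy, or an equivalent first-cycle argument), not a write-up detail; without it the correctness of the cap, and hence of the whole \ptm\ bound, is unproven. Note also that invoking it requires first establishing the reduction to the ``reach $0$ anywhere in $Q_Z$'' objective, which is precisely the step the paper makes explicit and your sketch leaves implicit.
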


\begin{proof}
According to Lemma~\ref{wkvass_dwnwcl}, we just need to compute for every location $q \in Q$ the maximal value $x_m$ such that $(q,x_m)$ is winning. 
We even do more: First, we compute the set $Q_Z$ of locations from which Reacher has a winning strategy with initial counter value $0$.
For this purpose, we use the previous algorithm, and here the time complexity is polynomial.
Second, we build the \vass\ $(Q',E')$, where $Q' = Q_Z \cup \{\bot\}$
and $E'$ is the union of $E \cap (Q_Z \times \Z \times Q_Z)$ and of $\{(q,1,\bot)\ |\ (q,v,q') \in E, q \in Q_Z, q' \not\in Q_Z\} \cup \{(\bot,0,\bot)\}$.
In $(Q',E')$, the value $0$ can only be reached in a location that belongs to $Q_Z$.
Consider the reachability game on $(Q',E')$, where the objective is $Q \times 0$,
like defined in~\cite{BJK10}; deciding the winner in this game is in~\ptm.
Moreover, Reacher has a winning strategy if, and only if,
he has a strategy in $Q$ to reach $(q,0)$ for any $q \in Q_Z$, hence to reach $(q_f,0)$.
Indeed, if a play visits a location outside of $Q_Z$, then Opponent has a winning strategy.
We conclude that deciding the winner of the reachability game is in \ptm\ too.\qed
\end{proof}

Note that we need the \shrtrg\ property for our \wkvass, else the algorithm could still require exponential time.
For example, consider that there is an edge from $q_0$ to $q_f$ with label $2^n$ and a self-loop on $q_f$ with label $-1$.
The algorithm would need $2^n+1$ iterations to conclude that $(q_0,0)$ is a winning configuration,
whereas the size of the \wkvass\ is linear in $n$ because of the binary encoding.

\section{Conclusion}

In this paper, we studied three simple semantics for games on counter systems,
and compared the complexity of reachability problems.
In dimension~two, every problem that we considered is undecidable.
In dimension~one, the decision problems associated to the counter value~$0$ are
in \ptm\ for the case of the \wkvass\ semantics and \psc\ for the two other semantics,
when the counter system is \shrtrgd.
Without this property, which guarantees that the set of all visited counter values is an interval,
the complexity is not settled yet, to the best of our knowledge, and lies between \xpt\ and \xps.

\vspace{5mm}
\textbf{Acknowledgement.} The author would like to thank Dietmar Berwanger and Laurent Doyen
for proposing the topic and for helping to organize the paper,
and Marie van den Bogaard for patient reading and checking of the proofs.

\bibliographystyle{unsrt}
\bibliography{CRG}
\end{document}